\begin{document}
\bibliographystyle{splncs}
\title{Memory Consistency Conditions for Self-Assembly Programming}

\author{Aaron Sterling\thanks{This research was supported in part by National Science Foundation Grants 0652569 and 0728806.}}

\institute{Laboratory for Nanoscale Self-Assembly\\Department of Computer Science\\Iowa State University\\Ames, Iowa, USA\\  \email{sterling@cs.iastate.edu}}

\maketitle
\begin{abstract}
Perhaps the two most significant theoretical questions about the programming of self-assembling agents are: (1) necessary and sufficient conditions to produce a unique terminal assembly, and (2) error correction.  We address both questions, by reducing two well-studied models of tile assembly to models of distributed shared memory (DSM), in order to obtain results from the memory consistency systems induced by tile assembly systems when simulated in the DSM setting.  The Abstract Tile Assembly Model (aTAM) can be simulated by a DSM system that obeys causal consistency, and the locally deterministic tile assembly systems in the aTAM correspond exactly to the concurrent-write free programs that simulate tile assembly in such a model.  Thus, the detection of the failure of local determinism (which had formerly been an open problem) reduces to the detection of data races in simulating programs.  Further, the Kinetic Tile Assembly Model can be simulated by a DSM system that obeys GWO, a memory consistency condition defined by Steinke and Nutt.  (To our knowledge, this is the first natural example of a DSM system that obeys GWO, but no stronger consistency condition.)  We combine these results with the observation that self-assembly algorithms are local algorithms, and there exists a fast conversion of deterministic local algorithms into deterministic self-stabilizing algorithms.  This provides an ``immediate'' generalization of a theorem by Soloveichik \emph{et al.} about the existence of tile assembly systems that simultaneously perform two forms of self-stabilization: proofreading and self-healing.  Our reductions and proof techniques can be extended to the programming of self-assembling agents in a variety of media, not just DNA tiles, and not just two-dimensional surfaces.
\end{abstract}
\newpage
\setcounter{page}{1}
\section{Introduction}
In August 2009, IBM and the DNA and Natural Algorithms Group at Caltech announced a joint project to use ``DNA origami'' as a scaffolding in order to place microchip components 6 nm apart, breaking the 22 nm barrier that is the current state of the art in chip manufacturing~\cite{ibm_caltech_origami}.  In contrast to some other emerging models of computation, such as quantum computing or membrane computing, about which there are extensive theoretical results but as yet little experimental success, the experimental results of algorithmic DNA self-assembly are significantly ahead of the theory.  (A recent survey of nanofabrication by DNA self-assembly, including a high-level explanation of DNA origami, appears in~\cite{nanofabrication-survey}.)  Indeed, there are only a handful of results about perhaps the two most significant theoretical questions about self-assembly programming: (1) necessary and sufficient conditions to produce a unique terminal assembly, and (2) error correction.  The goal of this paper is to recast those two questions as programming questions of memory consistency conditions and self-stabilizing algorithms, thus making techniques from the study of concurrent architectures and programming languages, and self-stabilization, available to this emerging area of research.

When a global structure (or organism) forms because of the connections formed by strictly simpler structures to one another, following only local rules, we say the global structure self-assembles.  The goal of \emph{algorithmic self-assembly} is to direct (or to program) the self-assembly of desired structures, by constructing self-assembling agents, and their environment, so they combine to form a desired result.  We will focus on algorithmic \emph{DNA} self-assembly in this paper, a field merging computer science and nanotechnology that began in the 1990s, spurred especially by the work of Adleman, Rothemund, and Winfree~\cite{sticker}.  The formalisms to model self-assembling systems contain the following: a finite set of distinct types of self-assembling agents, a set of local binding rules that completely determines the behavior of the agents, and an initial configuration of the system. A particular self-assembly ``run'' starts with an operator placing a finite seed assembly on the surface, and then allowing a ``solution'' containing infintely many of each agent type to mix on the surface.  Agents bind to the seed assembly, and to the growing configuration, consistent with the local rules, and in a random, asynchronous manner.  In the tile assembly models we consider in this paper, each agent is a four-sided tile, and the assembly surface is the two-dimensional integer plane.

In~\cite{sterling}, we proved a time lower bound for certain computational problems in self-assembly on a surface, by reducing a class of self-assembly models to message-passing models of distributed computing, and then applying a known impossibility result about local algorithms. We follow a similar strategy in this paper: we reduce well-studied tile assembly models to models of distributed shared memory and then explore some of the consequences of those reductions.  In particular, we consider the Abstract Tile Assembly Model (aTAM), due to Winfree~\cite{winfree} and Rothemund~\cite{rothemund}, and the Kinetic Tile Assembly Model (kTAM) due to Winfree~\cite{winfree}.  In the aTAM, binding between self-assembling agents is error-free and irreversible; while in the kTAM, binding errors are possible, and agents can bind but later dissociate with some positive probability.  We show that tile assembly systems defined in the aTAM can be simulated by systems of distributed shared memory (DSM) that obey \emph{causal consistency}~\cite{causalmemory}, a memory consistency condition weaker than the better-known sequential consistency.  In a sense, this level of memory consistency is ``tight'' for any DSM model that simulates the aTAM.

Next, we translate one of the fundamental theorems about the aTAM---that ``locally deterministic'' tile assembly systems produce a unique terminal assembly~\cite{solwin}---into the language of memory consistency.  We show that locally deterministic tile assembly systems correspond exactly to the concurrent-write free programs that simulate tile assembly in our DSM model.  Hence, the programming techniques to produce data-race free and concurrent-write free programs---and to detect data races---can be applied to the programming of self-assembling agents.

Regarding the kTAM, we show it reduces to a model of DSM obeying memory consistency condition GWO, which is strictly weaker than causal consistency.  Again, there is a sense in which this level of memory consistency is tight.  GWO was defined by Steinke and Nutt, to fill out a lattice with which they compared all known memory consistency conditions~\cite{memory_consistency}.  To our knowledge, the only DSM system in the literature that precisely obeys GWO is the one Steinke and Nutt built to show that some such model exists.  The DSM simulation of the kTAM, then, is the first natural example of a model that lies within GWO but no stronger level of memory consistency.

Finally, we combine these results with the observation that self-assembly algorithms are local algorithms, and there exists a fast conversion of deterministic local algorithms into deterministic self-stabilizing algorithms.  This provides an ``immediate'' generalization of a theorem by Soloveichik \emph{et al.} about the existence of tile assembly systems that simultaneously perform two forms of self-stabilization: proofreading and self-healing.  Our general reduction and proof techniques can be extended to the programming of self-assembling agents in a variety of media, not just DNA tiles, and not just two-dimensional surfaces.

Several researchers have voiced intuitions about a connection between self-assembly and distributed computing, for example Klavins~\cite{klavins-csm07}, or an Arora \emph{et al.} 2007 NSF Report~\cite{report}.  However, the first rigorous application of the theory of distributed computing to questions in self-assembly appeared in~\cite{sterling}.  Subsequent to~\cite{sterling} (and its extended version~\cite{surface}), we used the wait-free consensus hierarchy to separate models of self-assembly based on their synchronization power~\cite{datsa}; and we showed that graph assembly systems (a graph grammar self-assembly formalism due to Klavins) are distributed systems in a strong sense~\cite{sa-systems-are-dsystems}. To the best of our knowledge, the current paper is the first to consider self-assembly within the context of memory consistency models, and the first to apply multiprocessor programming techniques to biomolecular computation

The rest of the paper is structured as follows.  Section~\ref{section:background} provides background on tile assembly models and memory consistency models.  In Section~\ref{section:atamreduction} we reduce the Abstract Tile Assembly Model to a DSM system that obeys causal consistency.  In Section~\ref{section:ktamreduction} we reduce the Kinetic Tile Assembly Model to a DSM system that obeys GWO.  In Section~\ref{section:proofreading} we show how to generalize an existence theorem about proofreading and self-healing tilesets by using techniques from self-stabilizing algorithms.  Section~\ref{section:conclusion} concludes the paper and suggests directions for future research.
\section{Background} \label{section:background}
\subsection{Tile assembly background}
We now give the formal definitions of the tile assembly models we will work with.
\subsubsection{Abstract Tile Assembly Model}
Winfree's objective in defining the Abstract Tile Assembly Model was to provide a useful mathematical abstraction of DNA tiles combining in solution in a nondeterministic, asynchronous manner~\cite{winfree}.  Rothemund~\cite{rothemund}, and Rothemund and Winfree~\cite{programsize}, extended the original definition of the model.  For a comprehensive introduction to tile assembly, we refer the reader to~\cite{rothemund}.  Intuitively, we desire a formalism that models the placement of square tiles on the integer plane, one at a time, such that each new tile placed binds to the tiles already there, according to specific rules.  Tiles have four sides (often referred to as north, south, east and west) and exactly one orientation, \emph{i.e.}, they cannot be rotated.

A tile assembly system $\mathcal{T}$ is a 5-tuple $(T,\sigma,\Sigma, \tau, R)$, where $T$ is a finite set of tile types; $\sigma$ is the \emph{seed tile} or \emph{seed assembly}, the ``starting configuration'' for assemblies of $\mathcal{T}$; $\tau:T \times \{N,S,E,W\} \rightarrow \Sigma \times \{0,1,2\}$ is an assignment of symbols (``glue names'') and a ``glue strength'' (0, 1, or 2) to the north, south, east and west sides of each tile; and a symmetric relation $R \subseteq \Sigma \times \Sigma$ that specifies which glues can bind with nonzero strength.  In this model, there are no negative glue strengths, \emph{i.e.}, two tiles cannot repel each other.

A \emph{configuration of $\mathcal{T}$} is a set of tiles, all of which are tile types from $\mathcal{T}$, that have been placed in the plane, and the configuration is \emph{stable} if the binding strength (from $\tau$ and $R$ in $\mathcal{T}$) at every possible cut is at least 2.  An \emph{assembly sequence} is a sequence of single-tile additions to the frontier of the assembly constructed at the previous stage.  Assembly sequences can be finite or infinite in length.  The \emph{result} of assembly sequence $\overrightarrow{\alpha}$ is the union of the tile configurations obtained at every finite stage of $\overrightarrow{\alpha}$.  The \emph{assemblies produced by $\mathcal{T}$} is the set of all stable assemblies that can be built by starting from the seed assembly of $\mathcal{T}$ and legally adding tiles.  If $\alpha$ and $\beta$ are configurations of $\mathcal{T}$, we write $\alpha \longrightarrow \beta$ if there is an assembly sequence that starts at $\alpha$ and produces $\beta$.  An assembly of $\mathcal{T}$ is \emph{terminal} if no tiles can be stably added to it.

We are, of course, interested in being able to \emph{prove} that a certain tile assembly system always achieves a certain output.  In~\cite{solwin}, Soloveichik and Winfree presented a strong technique for this: local determinism.  An assembly sequence $\overrightarrow{\alpha}$ is \emph{locally deterministic} if (1) each tile added in $\overrightarrow{\alpha}$ binds with the minimum strength required for binding; (2) if there is a tile of type $t_0$ at location $l$ in the result of $\alpha$, and $t_0$ and the immediate ``OUT-neighbors'' of $t_0$ are deleted from the result of $\overrightarrow{\alpha}$, then no other tile type in $\mathcal{T}$ can legally bind at $l$; the result of $\overrightarrow{\alpha}$ is terminal.  Local determinism is important because of the following result.
\begin{theorem}[Soloveichik and Winfree~\cite{solwin}]
If $\mathcal{T}$ is locally deterministic, then $\mathcal{T}$ has a unique terminal assembly.
\end{theorem}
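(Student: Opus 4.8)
Fix a locally deterministic assembly sequence $\overrightarrow{\alpha}$ of $\mathcal{T}$ and let $A$ be its result. By clause~(3) of local determinism $A$ is terminal, and $A$ is evidently an assembly produced by $\mathcal{T}$, so the statement reduces to showing that $A$ is the \emph{only} terminal assembly of $\mathcal{T}$. The first thing I would do is record the dependency bookkeeping that $\overrightarrow{\alpha}$ imposes on $A$: by clause~(1) every tile in $A$ attached with binding strength exactly $2$, so for every location $l\in\mathrm{dom}(A)$ there is a well-defined set of \emph{input sides} through which $A(l)$ bound, hence a set $\mathrm{IN}(l)$ of input-neighbors and a set $\mathrm{OUT}(l)$ of OUT-neighbors (in the usual formalization of~\cite{solwin}, $\mathrm{OUT}(l)$ consists of exactly those neighbors $y$ for which the bond between $l$ and $y$ is one of the input bonds of $y$, so each OUT-neighbor of $l$ is placed strictly later than $l$ in $\overrightarrow{\alpha}$). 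Orienting an edge from each $l$ to each member of $\mathrm{OUT}(l)$ yields a well-founded acyclic digraph $G_{\overrightarrow{\alpha}}$ on $\mathrm{dom}(A)$, which is the device used throughout.

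The heart of the argument is the claim that \emph{every assembly produced by $\mathcal{T}$ is a subassembly of $A$}. Let $\overrightarrow{\gamma}$ be any assembly sequence from the seed $\sigma$. Since $\sigma\sqsubseteq A$ and a union of an increasing chain of subassemblies of $A$ is again one, it suffices to rule out a ``first divergence'': an index $i$ with $\gamma_i\sqsubseteq A$ but $\gamma_{i+1}\not\sqsubseteq A$, where $\gamma_{i+1}$ places a tile $t$ at a location $l$ with binding strength $\geq 2$ against $\gamma_i$. If $l\notin\mathrm{dom}(A)$, then because $\gamma_i\sqsubseteq A$ the same neighboring tiles sit around $l$ in $A$, so $t$ binds to $A$ at $l$ with strength $\geq 2$, contradicting the terminality of $A$ (clause~(3)). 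If instead $l\in\mathrm{dom}(A)$ with $A(l)=:t_0\neq t$, I would contradict clause~(2): after deleting $t_0$ and $\mathrm{OUT}(l)$ from $A$, the type $t$ should still bind at $l$ with strength $\geq 2$ in what remains, since the neighbors of $l$ that supply binding strength to $t$ in $\gamma_i$ are present with the correct types, and $t\neq t_0$ is ``another tile type.''

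The step I expect to be the main obstacle is this last one: showing that the neighbors $t$ leans on in $\gamma_i$ all \emph{survive} the surgery of clause~(2) — i.e.\ that none of them lies in $\mathrm{OUT}(l)$. A priori $\overrightarrow{\gamma}$ could have assembled the neighborhood of $l$ in an order very different from $\overrightarrow{\alpha}$, placing some $y\in\mathrm{OUT}(l)$ before $l$; that $y$ is deleted by clause~(2), and then no contradiction is visible. I expect the remedy to be a sharper choice of the first divergence together with an induction along $G_{\overrightarrow{\alpha}}$: take the divergence so that $l$ is $G_{\overrightarrow{\alpha}}$-minimal among all locations at which \emph{some} produced assembly disagrees with $A$, and use clause~(1) — every attachment has strength \emph{exactly} $2$ — to argue that an OUT-neighbor $y$ of $l$ occurring in $\gamma_i$ would itself have had to be propped up, out of $\overrightarrow{\alpha}$-order, by still-deeper tiles, a descending situation that a minimality/potential argument forbids. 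Granting this, $t$ binds at $l$ using only $\mathrm{IN}(l)$ and strength-$0$ neighbors of $t_0$, all of which remain, so clause~(2) fails; contradiction, and the claim follows.

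To finish, let $B$ be any terminal assembly produced by $\mathcal{T}$; by the claim $B\sqsubseteq A$. If $\mathrm{dom}(B)\subsetneq\mathrm{dom}(A)$, pick $l\in\mathrm{dom}(A)\setminus\mathrm{dom}(B)$ minimal in $G_{\overrightarrow{\alpha}}$; then every input-neighbor of $l$ lies in $\mathrm{dom}(B)$ with the same type (as $B\sqsubseteq A$), so by clause~(1) the tile $A(l)$ binds to $B$ at $l$ with strength $\geq 2$ and can be legally added — contradicting the terminality of $B$. Hence $\mathrm{dom}(B)=\mathrm{dom}(A)$, so $B=A$, and $A$ is the unique terminal assembly of $\mathcal{T}$.
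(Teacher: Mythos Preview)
The paper does not supply its own proof of this theorem: it is quoted verbatim as a result of Soloveichik and Winfree~\cite{solwin} and then used as a black box. So there is nothing in the present paper to compare your proposal against; the relevant comparison is with the original argument in~\cite{solwin}.

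Your outline does follow the standard Soloveichik--Winfree strategy: set up the input/output dependency DAG $G_{\overrightarrow{\alpha}}$ on the result $A$, show every producible assembly embeds in $A$ by a ``first divergence'' argument, and conclude by a minimality argument in $G_{\overrightarrow{\alpha}}$ that any terminal subassembly of $A$ must equal $A$. The case split at the divergence point (either $l\notin\mathrm{dom}(A)$, contradicting terminality, or $l\in\mathrm{dom}(A)$ with a wrong type, contradicting clause~(2)) is exactly the intended structure.

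You have correctly located the one genuine technical obstacle, and you are honest that your remedy is only a sketch. The difficulty is real: nothing prevents $\overrightarrow{\gamma}$ from placing an OUT-neighbor $y$ of $l$ before $l$, and then the surgery of clause~(2) deletes $y$, so the binding strength available to the rogue tile $t$ at $l$ may drop below $2$ in the post-surgery configuration. Your proposed fix---choose the divergence location to be $G_{\overrightarrow{\alpha}}$-minimal and then argue that any such $y$ would force a still-earlier divergence---is the right instinct, but as written it is not an argument: $y$ itself agrees with $A$ in $\gamma_i$, so $y$ is not a divergence point, and ``propped up out of $\overrightarrow{\alpha}$-order'' is not yet a contradiction of anything. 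In the original proof this is handled by a more careful induction that simultaneously tracks, for each placed location, which of its sides are actually load-bearing in $\overrightarrow{\gamma}$, and shows inductively that these are always among the input sides recorded by $\overrightarrow{\alpha}$; clause~(1) (exact strength~$2$) is what makes this bookkeeping go through. Your sketch gestures at this but does not carry it out, so the proposal as it stands has a gap precisely where you flagged one.
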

\subsubsection{Kinetic Tile Assembly Model}
The Kinetic Tile Assembly Model (kTAM) was defined by Winfree~\cite{winfree}, to provide a mathematical model for self-assembly (and disassembly) in solution, based on the kinetics of chemical reactions.  Slightly different versions of the kTAM appear in different papers on the subject.  We will follow the treatment in~\cite{selfhealing}, because we will use techniques from distributed computing to address an open question in~\cite{selfhealing}.

Whereas the aTAM is an error-free, irreversible, nondeterministic model, the kTAM is a probabilistic model in which tiles bind with some probability of error, and bound tiles can dissociate with some probability.  These probabilities are derived from the equations of chemical reaction kinetics.  There is a \emph{forward rate} $f$, which we assume is the same for any tile type at any position of the perimeter in the growing assembly, defined as $f=k_fe^{-G_{mc}}$, where $k_f$ is a constant that sets the time scale, and $G_{mc}$ is the logarithm of the concentration of each tile type in solution.  We assume that tiles can only fall off of the perimeter of the assembly; this assumption matches experimental observation.  The rate of dissociation (\emph{reverse rate} $r_b$) depends exponentially on the number of bonds that must be broken: $r_b=k_fe^{-bG_{se}}$, where $b$ is the total interaction strength with which the tile is attached to the assembly, and $G_{se}$ is the unit bond free energy, which may depend on the overall temperature of the system.

As with our treatment of the aTAM, we assume that ``strength 2'' bonds are sufficient for tiles in the kTAM to bind stably.  Hence we let $f=r_2$, which ensures that the forward growth of the kTAM mirrors (with high probability) the binding rules of the aTAM, and incorrectly bound tiles (with high probability) quickly dissociate.  We assume that $k_f$ is a physical constant that cannot be experimentally controlled, but by changing concentrations or temperature we could change $G_{mc}$ and/or $G_{se}$.

One objective of this paper is to develop proof techniques that would be applicable to models of self-assembly other than the aTAM and kTAM.  To this end, we will not use all the information available about the kTAM when we simulate it, but rather will use a more general construction, with the understanding that specific values for probabilities could be plugged in as required, based on the rate equations of the kTAM.  In particular, we will limit ourselves to the existence of a forward rate $f$ which is the same for each bond, a reverse rate $r_b$ which is much higher for erroneous bonds than for correct ones, and an error probability $\pi_e$ that a tile will bind incorrectly to the frontier of the assembly.
\subsection{Distributed shared memory background}
A \emph{distributed shared memory} model (or system) is a model of distributed processors and (possibly shared) read/write registers.  A processor $p$ can perform a \emph{read} or a \emph{write} on register $r$, if $p$ has permission to perform that \emph{operation} on $r$.  The only operations a processor can perform on a register are reads and writes.  The read operation begins with an \emph{invocation} and terminates when $p$ receives a \emph{value}.  The write operation begins with an invocation that includes a value, and ends when $p$ receives an \emph{ack}.  A processor can only perform one operation at a time.  We do not assume atomicity of reads and writes to a given register $r$.

A \emph{memory consistency model} specifies the allowable behavior of memory.  Study of memory consistency models arose from a conflict between the goals of hardware and compiler designers, to permit aggressive optimization (which requires ``weak'' memory consistency), and the desire of programmers to have concurrent code execute in a predictable fashion (which requires ``strong'' memory consistency).  We refer the reader to~\cite{mem-consist-tutorial} for a survey and tutorial on these issues.  In 2004, Steinke and Nutt presented a theory that unified the various memory consistency models that had been proposed in the literature~\cite{memory_consistency}.  They showed the existence of a lattice of 13 memory consistency models; this lattice contained all known models, and showed the logical interrelation between each.  We now describe the memory consistency conditions that will be most important for this paper.

A system of distributed processors is \emph{sequentially consistent}~\cite{sequential-consistency} if the result of any execution is the same as if the operations of all the processors were executed in some sequential order, and the operations of each individual processor appear in this sequence in the order specified by its program.  A system of distributed processors is \emph{causally consistent}~\cite{causalmemory} if for each processor the operations of that processor plus all writes known to that processor appear to that processor to appear in a total order that respects potential causality.  A system of distributed processors is \emph{PRAM consistent}~\cite{pram-consistency} if writes performed by a single processor are seen by all other processors in the order in which they were issued, but writes from different processors may be seen in different orders by different processors.  These three consistency conditions are in descending order of strength: a sequentially consistent system is causally consistent, and a causally consistent system is PRAM consistent, but the converses do not always hold.

To compare consistency conditions, Steinke and Nutt defined logical properties about \emph{processor views} of a DSM system.  A processor view is a total order on a subset of operations that occurred during an execution of a DSM system, the subset being those operations an individual processor performed or could deduce occurred.  The property \emph{GPO} (``global process order'') is the condition that there is global agreement on the order of operations from each process.  The property \emph{GWO} (``global write-read-write order'') is the condition that there is global agreement on the order of any two writes when a process can prove it has read one before the other. A DSM system satisfies GPO exactly when it is PRAM consistent, and it satisfies both GPO and GWO exactly when it is causally consistent.

A relation $\prec$ is a \emph{causality order} of operations if $o_1 \prec o_2$ means that one of the following holds for any operations $o_1$ and $o_2$:
\begin{enumerate}
\item $o_1$ and $o_2$ were performed by the same processor $p$, and $p$ executed $o_1$ before $o_2$.
\item $o_2$ reads the value written to a shared register by $o_1$.
\item There is some other operation $o'$ such that $o_1 \prec o' \prec o_2$.
\end{enumerate}
We will say that program $P$ is \emph{concurrent-write free} in DSM model $\mathcal{M}$ if every processor in $\mathcal{M}$ runs $P$, and there is no legal execution history $H$ of $P$ such that the causality order induced by $H$ contains two writes $w_1$ and $w_2$ such that $w_1 \nprec w_2$ and $w_2 \nprec w_1$ (\emph{i.e.}, there are no writes that are causally concurrent under any possible execution).  In other words, $P$ is concurrent-write free if a single program $Q$ that simulates each processor in the system executing $P$ is concurrent-write free under the standard definition that no execution of $Q$ contains conflicting writes.
\section{Reduction of tile assembly models to distributed shared memory models}
\subsection{The aTAM reduces to causally consistent models of DSM} \label{section:atamreduction}
The objective of this section is to show that the aTAM can be simulated by a causally consistent DSM system, and, under a reasonable definition of ``simulation,'' no DSM system that fails to obey causal consistency can simulate the aTAM.  First, we define formally what it means for a DSM system to simulate a tile assembly system.  For simplicity, we limit consideration to tile assembly systems that self-assemble on the first quadrant of $\mathbb{Z}^2$; our definitions could be extended to the entire integer plane.
\begin{definition}
Let $\mathcal{T}=\langle T,\sigma,\Sigma,\tau,R\rangle$ be a tile assembly system in the aTAM, and $M$ be a model of distributed processors.  We say $M$ \emph{simulates $\mathcal{T}$ on the $k \times k$ surface} if the following holds.
\begin{enumerate}
\item The network topology of $M$ is that of a $k \times k$ square.
\item Each processor $p_{ij}$ $(0 \leq i,j < k)$ in $M$ has a write-once output register $\rho_{ij}$ that is initialized to value EMPTY, and can take on $|T|$ distinct values other than EMPTY.
\item There is a bijection $\psi$ from tile types in $T$ to possible values of $\rho$ (not including EMPTY).
\item For each legal tile assembly sequence $\overrightarrow{\alpha}$ of $\mathcal{T}$, there is a legal execution $E$ of $M$ such that, if tile type $t$ is placed on coordinate $(i,j)$ in $\overrightarrow{\alpha}$, then processor $p_{ij}$ writes $\psi(t)$ to $\rho_{ij}$ in $E$.  Moreover, processors write to their respective $\rho$ in $E$ in the same order that tiles get placed on the surface in $\overrightarrow{\alpha}$.  (The placement of the seed assembly $\sigma$ is simulated by writing the value $\psi(t)$ to $\rho_{ij}$ if $\sigma(i,j)=t$.)
\item For each legal execution $E$ of $M$, there is a corresponding legal assembly sequence $\overrightarrow{\alpha}$, such that if $p_{ij}$ writes $\psi(t)$ to $\rho_{ij}$, then tile type $t$ is placed on location $(i,j)$ in $\overrightarrow{\alpha}$.  Moreover, the order of writing values in $E$ is preserved by the order of placing tiles in $\overrightarrow{\alpha}$.
\end{enumerate}
\end{definition}
Intuitively, $M$ simulates $T$ on a finite surface if each process in $M$ behaves like a location on the surface, with each processor executing a local algorithm that mimics the binding rules required by $R$.  We now generalize the above definition to arbitrary tile assembly systems.
\begin{definition}
Let $\mathcal{T}$ be a tile assembly system in the aTAM.  We say a class $\mathcal{M}=\{M_0,M_1,\ldots\}$ of DSM models \emph{simulates} $\mathcal{T}$ if, for each $k \in \mathbb{N}$, $M_k$ simulates $\mathcal{T}$ on a $k \times k$ surface. Let $\phi$ be a mapping from tile assembly systems to algorithms.  Then we say that $(\mathcal{M},\phi)$ \emph{simulates} the aTAM if, for any tile assembly system $\mathcal{T}$, $\mathcal{M}$ simulates $\mathcal{T}$ when the processors in the models in $\mathcal{M}$ run $\phi(\mathcal{T})$ as their local algorithm, beginning at an initial state determined by the seed assembly $\sigma$ as above.
\end{definition}
We now prove that there exists a class of causally consistent DSM models that simulates the aTAM.
\begin{theorem} \label{theorem:aTAMsimulation}
There exists a class of DSM models $\mathcal{M}$ that simulates the aTAM.  Each $M \in \mathcal{M}$ is causally consistent.  Further, the models in $\mathcal{M}$ do not obey a memory consistency condition in Steinke and Nutt's lattice that is stronger than causal consistency.
\end{theorem}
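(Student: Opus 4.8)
\emph{Proof sketch.} The plan is to build, for each $k$, an explicit DSM model $M_k$ on the $k \times k$ grid together with a local algorithm $\phi(\mathcal{T})$, and to obtain causal consistency essentially for free by using (a grid-restricted instance of) a standard causal-memory protocol as the memory subsystem of $M_k$; the remaining work is then to verify the five simulation conditions and the tightness clause. Concretely, processor $p_{ij}$ owns the write-once register $\rho_{ij}$, readable by $p_{ij}$ and its at most four lattice neighbours. The algorithm $\phi(\mathcal{T})$ run by every $p_{ij}$ loops forever: it re-reads the locally cached copies of its neighbours' output registers, and if $\rho_{ij}$ is still EMPTY and there is a tile type $t \in T$ whose glues (under $\tau$) are compatible (under $R$) with the tile types currently visible at the occupied neighbours, with total interaction strength at least $2$, then $p_{ij}$ nondeterministically picks such a $t$ and writes $\psi(t)$ to $\rho_{ij}$. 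The seed is installed by initialising $\rho_{ij}$ to $\psi(\sigma(i,j))$ for $(i,j)$ in the domain of $\sigma$ and letting those values propagate. The nondeterminism and asynchrony of $\phi(\mathcal{T})$ mirror the nondeterminism and asynchrony of tile placement.

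First I would check conditions (1)--(5). Conditions (1)--(3) are immediate from the construction and the choice of $\psi$. For (4), given a legal assembly sequence $\overrightarrow{\alpha}$, I schedule the processors so that when the $n$-th tile of $\overrightarrow{\alpha}$ (type $t$ at $(i,j)$) is to be placed, $p_{ij}$ has already received the update messages for all neighbours that precede it in $\overrightarrow{\alpha}$; since those neighbours supplied strength $\geq 2$ in $\overrightarrow{\alpha}$, the write of $\psi(t)$ is enabled, and performing the writes in the order of $\overrightarrow{\alpha}$ yields the required execution. For (5), given a legal execution $E$, I read off the writes in the order in which the memory subsystem serialises them; the crux is the lemma that causal precedence among writes refines the tile-placement order, which holds because each output register is write-once, so a read that returns $\psi(t')$ witnesses a causally earlier (hence strictly earlier, in this serialisation) write of $\psi(t')$ to that register. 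Consequently the occupied-neighbour set that licensed each write is already present earlier in the induced sequence, so that sequence is a legal assembly sequence of $\mathcal{T}$, with write order preserved. Causal consistency of each $M_k$ is inherited from the memory subsystem; if instead one builds the memory from scratch, one verifies it directly by producing, for each legal execution and each $p_{ij}$, a legal serialisation of all writes together with $p_{ij}$'s reads that respects the causality order $\prec$, the only possible obstruction being an EMPTY-read of $\rho_{i'j'}$ that $\prec$ nonetheless forces after the unique write $W_{i'j'}$; unwinding the causal chain and using that $\prec$ is contained in the ``starts-before-ends'' order of the execution rules this out.

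For tightness, I would show the models are not sequentially consistent, and invoke the structure of Steinke and Nutt's lattice to conclude they obey no strictly stronger condition. Take a tile assembly system with a seed column on the $y$-axis and tile types $A$, $B$ that bind with strength $2$ at $(1,0)$ and $(1,1)$ respectively, each using only its seed neighbour, so $A$ and $B$ are placed independently although $p_{10}$ and $p_{11}$ are adjacent. Choose $E$ so that the update carrying $A$ has not reached $p_{11}$ when $p_{11}$ reads $\rho_{10}$, and symmetrically for $p_{10}$ and $\rho_{11}$. Then any serialisation for $p_{10}$ must order its own write of $A$ before its (EMPTY) read of $\rho_{11}$ and hence before the write of $B$, while any serialisation for $p_{11}$ forces the opposite order, so no single global serialisation explains $E$; thus $M_k$ is not sequentially consistent, while causal consistency still holds (the two per-processor views are individually legal and causal). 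It then remains to confirm against the lattice that ruling out sequential consistency rules out every condition that covers causal consistency there --- a finite check, delicate mainly because write-once registers satisfy the global-data-order property trivially, so one must be sure the condition actually violated by $E$ is implied by each such cover.

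I expect the main obstacle to be the interface between the tile-assembly semantics and the memory semantics --- specifically, the lemma in condition (5) that causal precedence among writes refines tile placement, and (in the from-scratch variant) the parallel real-time-versus-causality bookkeeping needed for the direct causal-consistency argument, both of which are sensitive to exactly which register semantics one assumes in the absence of atomicity. The Steinke--Nutt lattice check for the tightness clause is a secondary delicate point, finite but requiring the precise covers of causal consistency to be spelled out.
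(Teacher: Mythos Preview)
Your simulation scheme for conditions (1)--(5) differs architecturally from the paper's --- you use a pull model where $p_{ij}$ reads its neighbours' output registers $\rho$ directly, whereas the paper uses a push model with auxiliary multi-writer registers $r^1$, $r^2$, and an $Index$ register into which neighbours write glue information --- but your simpler architecture is adequate for the simulation claims, and your arguments for (4) and (5) are along the right lines.

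The genuine gap is in the tightness clause, and you have already put your finger on it without realising it is fatal. In Steinke and Nutt's lattice, every node strictly above causal consistency (GPO$+$GWO) contains the property GDO (global data order, equivalent to cache consistency). In your construction the only shared registers are the write-once $\rho_{ij}$, so GDO holds vacuously: with at most one write per location, all processors trivially agree on the order of writes to each location. Consequently your $M_k$ satisfies GPO$+$GWO$+$GDO, which \emph{is} a node of the lattice strictly stronger than causal consistency, and the theorem's tightness claim fails for your model. Your counterexample does rule out sequential consistency, but sequential consistency is not the immediate cover of causal consistency; the intermediate node GPO$+$GWO$+$GDO survives your example, so the ``finite check'' you propose at the end would not go through.

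The paper engineers around this by deliberately introducing multi-writer registers: several neighbours may write to the same $r^1_{ij}$ (or $r^2_{ij}$), and the reader nondeterministically ``hears'' any one of the messages ever written there, not necessarily the most recent. That mechanism is exactly what breaks GDO while preserving causal consistency. To repair your approach you would need to add some analogous multi-write, order-losing component; once you do, the pull-versus-push distinction is secondary.
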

\begin{proof}
Fix $k \in \mathbb{N}$.  We define a DSM model $M_k$ as follows.  $M$ contains $k^2$ processors, with network graph of a $k \times k$ grid.  We will refer to the processors as $p_{ij}$ $(i,j < k)$, to denote the processor at location $(i,j)$ in the network grid.  Note this is a convenience for the proof; the processors do not have unique ID's, and do not know whether they are on the edge, the corner, or the interior of the grid.  We assume that all tile assembly systems have temperature 2, as that is sufficient for Turing universality.  Each processor can read from two registers, $r^1_{ij}$ and $r^2_{ij}$.  Each processor can write to twelve registers, $r^1$, $r^2$ and $Index$ of each of its neighbors to the north, south, east and west on the grid.  (For processors on the edge of the grid, these registers exist, and ack when written to, but no processor ever reads from them if they ``belong'' to nonexistent processors.  This way, a processor cannot deduce that it is on the edge of the grid.)  To simulate $\tau>2$, we could use registers $r^1,\ldots,r^{\tau}$ instead of just two such registers per processor.

The register $Index_{ij}$ is initialized to the value 1, and can take 3 possible values: 1, 2, and ``1 and 2.''  The algorithm each processor runs will look first at the value in $Index$ to determine whether to write to $r_1$, $r_2$ or both.

Each processor $p_{ij}$ has a write-once register $\rho_{ij}$ initialized to the value EMPTY.  Only $p_{ij}$ can write to $\rho_{ij}$, and it can write one of $|T|$ distinct values.  Fix a bijection $\psi$ between tile types of $T$ and possible values that can be written to each $\rho$.  The processors of $M_k$ all run a common nondeterministic, distributed, local, algorithm as follows.  Before starting execution, $M_k$ is configured to simulate the seed assembly of $\mathcal{T}$: for $i,j<k$, if $\sigma(i,j) \neq \emptyset$ then $p_{ij}$ writes $\psi(\sigma(i,j))$ to $\rho_{ij}$.

Once placement of the seed assembly is simulated, execution of the algorithm proceeds in synchronized stages (rounds), beginning at stage $s=0$.  At the start of time stage $s$, if at stage $s-1$, $p_{ij}$ wrote a value $\psi(t)$ (for $t \in T$) to $\rho_{ij}$, then $p_{ij}$ writes to each of its neighbors in a way that communicates the glues and glue strengths of the tile $p_{ij}$ is simulating, as follows.  (We limit discussion to communication with the neighbor to the north; communication to the other neighbors is similar.)

First, $p_{ij}$ reads $Index_{ij+1}$.  If $p_{ij}$ wants to communicate a bond with strength 2 to its northern neighbor, then it writes its message in both $r^1_{ij+1}$ and $r^2_{ij+1}$.  If it wants to communicate a bond with strength 1, then $p_{ij}$ writes $\langle \textrm{S},g \rangle$ to $r^1_{ij+1}$, to $r^2_{ij+1}$, or to both registers, based on the value it read from $Index_{ij+1}$; here $g$ is a message that corresponds to the glue type on the north side of $t$.  This indicates to the northern neighbor of $p_{ij}$ that glue type $g$ is present immediately to the south.  More generally, each message written to $r^1$ or $r^2$ of a given processor will be of form $\langle d, g \rangle$ where $d \in \{\textrm{N, S, E, W} \}$ is a direction, and $g$ is a glue type.

After processors write, the second phase of stage $s$ takes place as follows.  The algorithm of exactly one processor $p_{ij}$ (chosen nondeterministically from the set of processors that could legally write a value to their respective $\rho$ using the protocol explained below) with $p_{ij}$ still set to EMPTY chooses (again nondeterministically) which message it will ``hear'' in each of $r^1_{ij}$ and $r^2_{ij}$, of all the messages that have been written to $r^1_{ij}$ and $r^2_{ij}$ since the start of execution of the algorithm.  Processor $p_{ij}$ then writes a value to $\rho_{ij}$, by applying $\psi$ and $\psi^{-1}$ to the binding rules induced by the relation $R$ of $\mathcal{T}$, so $p_{ij}$ writes $\psi(t)$ for some $t$ such that $t$ can legally bind given neighbors with glue types as indicated by the values of $r^1_{ij}$ and $r^2_{ij}$.  If more that one tile type can legally bind given the same set of neighbors, the value $\psi(t)$ is again chosen nondeterministically from the set of legal values.

Finally, $p_{ij}$ ``increments'' the value of $Index$ of each of its neighbors that it wrote to.  Continuing with our example of writing to the northern neighbor, if the value is 1, it writes the value 2 to $Index_{ij+1}$.  If the value is 2, it writes ``1 and 2'' to $Index_{ij+1}$.  This concludes stage $s$ of the algorithm.

$M_k$ simulates $\mathcal{T}$ on a $k \times k$ surface in the ``natural'' way.  For each nondeterministic run of $\mathcal{T}$, \emph{i.e.}, each tile assembly sequence $\overrightarrow{\alpha}$, there is a nondeterministic execution of $M_k$ such that the behavior of each $p_{ij}$ mimics the behavior of the location $(i,j)$ in $\overrightarrow{\alpha}$.  Similarly, for each nondeterministic execution $E$ of $M_k$, there is a legal tile assembly sequence that makes the same nondeterministic choices, since the choices of $E$ are constrained by the binding rules that determine the legal behavior of any assembly sequence of $\mathcal{T}$.

Since our choice of $k$ was arbitrary, we can define $\mathcal{M} = \{ M_k \mid k \in \mathbb{N} \}$.  Such an $\mathcal{M}$ simulates $\mathcal{T}$.  Further, our definition of each $M_k$ was uniform with respect to any set of binding rules determined by a given $R$.  So we can define a class of models $\mathcal{M}'$ such that each $M'_k$ runs an algorithm that simulates $\sigma$ and $R$ for any tile assembly system.  Hence there is a class of models $\mathcal{M}'$ that simulates the aTAM.

We turn now to the memory consistency conditions obeyed by the $M'_k$.  First, each $M'_k$ must be causally consistent, because each processor writes only once to any memory location, and writes deterministically based on values read from other writes (values written either to $r^1$ and $r^2$), or at stage 0 to simulate the seed assembly, which is a finite, completely determined set of decisions.  Therefore, for any process $p_{ij}$, the operations of that process, and any writes known to that process, occur in a total order that respects potential causality, even though the values of those writes were nondeterministically chosen.

Causal consistency is an upper bound for the memory consistency of the $M'_k$ as well.  This is because the only memory consistency conditions that are stronger than causal consistency in the lattice by Steinke and Nutt are conditions that include the property GDO (Global Data Order, which is equivalent to cache consistency).  But $M'_k$ makes no guarantee that all writes to the same memory location are performed in some sequential order: a write to $r^1$ or $r^2$ that occurs at an earlier stage than another write may still be nondeterministically chosen as the value of that register in a legal execution.  So causal consistency exactly captures the memory consistency of this simulation of the aTAM.
\end{proof}
It is worth noting that if a DSM model is going to simulate the behavior of a tile assembly system in the aTAM, then causal consistency is the weakest memory consistency model it can follow.  This is so because causal consistency is the combination of properties GPO and GWO.  If a DSM model does not satisfy GPO (\emph{i.e.}, does not satisfy PRAM consistency), then consider processor $p_{ij}$ that writes multiple times to a neighbor $q$, yet $q$ does not see these writes in the order in which they were issued.  There is no legal tile assembly sequence that corresponds to such an execution, as tiles in the aTAM are placed one at a time, error-free, based on information transmitted to neighboring locations by already-placed tiles.  Similarly, if a DSM model does not satisfy GWO, there exists some execution where $q_1$ and $q_2$ disagree on the ordering of writes, even though $q_1$ can prove that a particular write happened first. As before, no legal tile assembly sequence captures this behavior, as a tile assembly sequence induces a total order on the system, such that at each stage of assembly, a newly placed tile communicates its glue types to all neighboring locations.  So with respect to the lattice defined by Steinke and Nutt, causal consistency is ``tight'' for simulations of the aTAM.

We now show that, under our reduction, locally deterministic tile assembly systems correspond to an important class of simulating programs.  Let $P$ be a program that simulates a tile assembly system; we will call $P$ \emph{binding-rule determined} if for each set of messages that simulates strength 2 bonds, there is at most one value $P$ writes to $\rho$ to simulate a tile type.  It is easy to check whether $P$ is binding-rule determined, but harder to check whether a tile assembly system is locally deterministic, as we discuss below.
\begin{theorem} \label{theorem:local-det-is-cwf}
Let $\mathcal{M}$ be the class of DSM systems in Theorem~\ref{theorem:aTAMsimulation}.  Then if $\mathcal{T}$ is a locally determinstic assembly system, the program $P$ for which $\langle \mathcal{M},P \rangle$ simulates $\mathcal{T}$ is concurrent-write free.  Conversely, if $\mathcal{M}$ running concurrent-write free, binding-rule determined, program $P$, simulates tile assembly system $\mathcal{T}$, then $\mathcal{T}$ is locally deterministic.
\end{theorem}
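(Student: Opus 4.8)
The plan is to prove the two implications separately, using Theorem~\ref{theorem:aTAMsimulation} (and, crucially, the correspondence between legal executions of $\mathcal{M}$ and assembly sequences of $\mathcal{T}$ established in its proof), together with the theorem of Soloveichik and Winfree quoted above and the structural machinery behind it.

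For the forward implication, suppose $\mathcal{T}$ is locally deterministic. By the Soloveichik--Winfree theorem it has a unique terminal assembly $A$, and the proof of that theorem equips $A$ with an input-side function: every non-seed position $\vec x\in\mathrm{dom}(A)$ has a set of input sides whose glue strengths sum to exactly $2$ (this uses condition~(1) of local determinism), and the relation ``$\vec y$ lies on an input side of $\vec x$'' generates a well-founded DAG $\mathcal{D}$ on $\mathrm{dom}(A)$ rooted at the seed. The first step is to show, by induction along $\mathcal{D}$, that in \emph{every} legal execution of $\mathcal{M}$ running $P$: (a)~each $p_{\vec x}$ that writes to $\rho_{\vec x}$ at all writes $\psi(A(\vec x))$; and (b)~the only processors that ever write to $r^1_{\vec x}$, $r^2_{\vec x}$, or $Index_{\vec x}$ are the $p_{\vec y}$ with $\vec y$ a neighbour of $\vec x$ carrying nonzero glue towards $\vec x$ in $A$, and these writes are linearly ordered by causality. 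The register $Index_{\vec x}$ is what forces the linear order: before a neighbour writes a bond-message to $\vec x$ it reads $Index_{\vec x}$, and I would argue that in a legal execution this read returns the value left by the previously-acting neighbour of $\vec x$ (not a stale value), since otherwise two neighbours would each write into the same ``strength-$1$ slot'' and the resulting register state is one no single-tile-at-a-time assembly sequence of $\mathcal{T}$ can produce, contradicting legality of the execution. Hence writes to each of $r^1_{\vec x}, r^2_{\vec x}, Index_{\vec x}$ form a causal chain, and writes to $\rho_{\vec x}$ are trivially non-concurrent (performed only by $p_{\vec x}$, only once). So no register ever receives two causally concurrent writes; that is, $P$ is concurrent-write free.

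For the converse, suppose $\mathcal{M}$ running a concurrent-write free, binding-rule determined program $P$ simulates $\mathcal{T}$. Pick a maximal legal execution $E$ of $\mathcal{M}$ and let $\overrightarrow{\alpha}$ be a corresponding assembly sequence (it exists by the simulation correspondence, faithfully stage by stage). I claim $\overrightarrow{\alpha}$ witnesses local determinism. Condition~(3) (the result is terminal) is immediate from maximality of $E$ plus faithfulness. For condition~(1), suppose a tile were added at $\vec x$ in $\overrightarrow{\alpha}$ with binding strength $>2$; then when $p_{\vec x}$ acted it had been sent bond-messages of total strength $>2$ by distinct neighbours, which --- by the $Index_{\vec x}$/two-slot discipline, together with the freedom to reorder causally-incomparable neighbours --- yields a legal execution in which two of those neighbours write into the same slot with no causal edge between them, contradicting concurrent-write freeness. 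For condition~(2), suppose $A(\vec x)=t_0$ but, after deleting $t_0$ and its immediate OUT-neighbours from the result of $\overrightarrow{\alpha}$, some $t_1\neq t_0$ could legally bind at $\vec x$. Then there is a legal execution of $\mathcal{M}$ in which the messages available to $p_{\vec x}$ are exactly those describing $t_1$'s binding; since $P$ is binding-rule determined, that message set determines at most one value of $\rho$, so $p_{\vec x}$ cannot be made to write $\psi(t_0)$ there in one legal execution and $\psi(t_1)$ there in another --- a contradiction. Thus $\overrightarrow{\alpha}$ is locally deterministic, and $\mathcal{T}$ is a locally deterministic assembly system.

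The main obstacle is making the informal ``legal execution $\leftrightarrow$ assembly sequence'' correspondence of Theorem~\ref{theorem:aTAMsimulation} precise enough for both directions: in particular, characterizing exactly which reads of $Index_{\vec x}$ are \emph{forced} (versus permitted-to-be-stale) under causal consistency, and arguing that whenever local determinism fails there is a legal --- possibly reordered --- execution that \emph{exhibits} the failure, either as a concurrent write or as a binding-rule-determined--violating choice. A related subtlety is the treatment of ``extra'' bonds (an edge of $A$ used as an input side by neither endpoint): these must be shown harmless --- a neighbour carrying such a glue reads a ``full'' $Index_{\vec x}$ and writes nothing --- so they create neither a concurrent write in the forward direction nor an obstruction to extracting a locally deterministic sequence in the converse.
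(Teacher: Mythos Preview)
Your approach is essentially the same as the paper's: in the forward direction both arguments rely on the $Index$ mechanism to force a causal chain among the writes to each processor's registers, and in the converse both use concurrent-write freedom to bound the number (and hence total strength) of input-side messages to exactly two, then invoke binding-rule determinism to secure uniqueness of the tile type. The paper's proof is much terser --- it simply asserts that under local determinism each of $r^1_{xy},r^2_{xy}$ is written at most once and that the interposed read of $Index$ between two neighbour-writes yields $w_1\prec w_2$ or $w_2\prec w_1$; for the converse it observes that a third neighbour, or a second neighbour alongside a strength-2 neighbour, would force concurrent writes on some $r^i_{xy}$, and then appeals to binding-rule determinism in one line. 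Your version is more scrupulous: you invoke the input-side DAG from the Soloveichik--Winfree argument explicitly, you flag the question of whether $Index$ reads might return stale values under mere causal consistency, and you treat the ``extra'' edges (used as input by neither endpoint) --- all points the paper passes over silently. One small sharpening: in your converse argument for condition~(2), the contradiction requires that $t_1$ attach using the \emph{same} message set as $t_0$; this holds because deleting the OUT-neighbours of $t_0$ leaves only its IN-neighbours adjacent to $\vec x$ (so any competing $t_1$ must bind on exactly those sides), but you should make that step explicit rather than let it ride on ``binding-rule determined.''
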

\begin{proof}
Let $\mathcal{T}$ be a locally deterministic tile assembly system, let $\overrightarrow{\alpha}$ be a legal assembly sequence of $\mathcal{T}$, and let $(x,y)$ be a location in the result of $\overrightarrow{\alpha}$.  As before, we assume we are operating within the aTAM at temperature 2.  Either $(x,y)$ is part of the seed assembly, or, in the sequence $\overrightarrow{\alpha}$, prior to a tile being placed at $(x,y)$, either one neighbor of $(x,y)$ has a tile placed with a strength 2 bond incident to $(x,y)$, or two neighbors have tiles placed with strength 1 bonds incident to $(x,y)$.  The simulation of $\overrightarrow{\alpha}$ in $\mathcal{M}$ via the reduction of Theorem~\ref{theorem:aTAMsimulation} produces a concurrent-write free program, as for processor $p_{xy}$, each of $r^1_{xy}$ and $r^2_{xy}$ is written to at most once; and, if two neighbors of $p_{xy}$ write to $Index_{xy}$ with writes $w_1$ and $w_2$, in all legal execution histories, there will be an operation of one processor reading the value written by the other processor to $Index$ in between $w_1$ and $w_2$, so either $w_1 \prec w_2$ or $w_2 \prec w_1$, because of the definition of causal ordering.

Conversely, suppose $\langle \mathcal{M},P \rangle$ simulates a tile assembly system $\mathcal{T}$, and $P$ is concurrent-write free. Then there is no execution history for $\langle \mathcal{M},P \rangle$ that contains concurrent writes on $r^1_{ij}$ and $r^2_{ij}$ for any $p_{ij}$.  But this means that at most two neighbors could have written to the registers of $p_{jk}$, as otherwise two neighbors $q_1$ and $q_2$ would have written to $r^1_{ij}$, and causal consistency permits histories in which those writes could happen in either order.  So the writes are concurrent after all, contrary to assumption. So at most two neighbors write to any location before that location decides which tile type to simulate.  Further, suppose one of the neighbors of $p_{ij}$ that writes to the registers of $p_{ij}$ before $p_{ij}$ decides, writes to $p_{ij}$ with a strength 2 bond.  Then that neighbor writes to both $r^1_{ij}$ and $r^2_{ij}$, so no other neighbor can write to the registers of $p_{ij}$ before $p_{ij}$ decides, or there will be concurrent writes, by the above argument.  Hence the neighbors writing to $p_{ij}$ write messages that simulate exactly a strength 2 bond.  Finally, since $P$ is binding-rule determined, the conditions of local determinism are satisfied, and $\mathcal{T}$ must be locally deterministic.
\end{proof}
%
``$\mathcal{T}$ is locally deterministic'' is an undecidable property, as the standard tile assembly Turing machine simulation is locally deterministic, and it could be modified to do something not locally deterministic iff a machine achieves a halting state.  Nevertheless, it would be useful to test for that property when programming---and debugging---a tile assembly system, hence self-assembly simulation and programming tools have attempted to include that functionality~\cite{isutas}~\cite{tiletemplates}.  Theorem~\ref{theorem:local-det-is-cwf} classifies this problem precisely, and indicates that programming techniques to ensure~\cite{typed-data-race-freedom}, and detect~\cite{data-race-detection}, data-race freedom and concurrent-write freedom in parallel systems can be used productively to program self-assembling agents.
\subsection{The kTAM reduces to GWO-consistent models of DSM} \label{section:ktamreduction}
Whereas the aTAM was a nondeterministic model, the kTAM is a stochastic model.  In order to apply a result about self-stabilizing algorithms to the kTAM, we will construct a DSM simulation whose processors run deterministic algorithms and whose registers return values probabilistically, for example by providing erroneous information with some nonzero probability.  We use the same definition as above for what it means for a DSM system to simulate a tile assembly system, except substituting ``kTAM'' for ``aTAM'' when it appears, and the $\rho_{ij}$ can be written to multiple times, not just once, since the kTAM is a reversible model in which binding errors can occur.  It turns out that the probabilistic behavior of the registers is captured by the memory consistency condition GWO.
\begin{theorem} \label{theorem:kTAMreduction}
There exists a class of DSM models $\mathcal{M}$ that simulates the kTAM.  Each $M_k \in \mathcal{M}$ obeys GWO.  Further, the models in $\mathcal{M}$ do not obey a memory consistency condition in Steinke and Nutt's lattuce that is stronger than GWO.
\end{theorem}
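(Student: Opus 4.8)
The plan is to mirror the construction in the proof of Theorem~\ref{theorem:aTAMsimulation}, modifying it so that the processors run \emph{deterministic} local algorithms while the registers behave \emph{probabilistically}; the randomness of the kTAM is thereby relocated from the scheduler to the memory. Fix $k$. As before, $M_k$ has $k^2$ processors in a $k \times k$ grid, each $p_{ij}$ reading $r^1_{ij}$, $r^2_{ij}$ and the $Index$ registers of its neighbors, and writing to $r^1$, $r^2$, $Index$ of its four neighbors together with its own output register $\rho_{ij}$. The essential changes are: (i) $\rho_{ij}$ is no longer write-once, since $p_{ij}$ may overwrite it, and in particular reset it to EMPTY, to model dissociation; (ii) each read of $r^1_{ij}$ or $r^2_{ij}$ returns, with probability $\pi_e$, an erroneous glue message in place of a message actually written there, modelling an incorrect bond; and (iii) after $p_{ij}$ has written $\psi(t)$ to $\rho_{ij}$ it keeps re-reading its neighbour registers, and resets $\rho_{ij}$ to EMPTY with the probability obtained from the reverse rate $r_b$ when the bond it computed has effective strength below $2$ (which happens exactly when it rests on an erroneous read), while a correctly bound tile is reset only with the much smaller probability coming from $r_2 = f$. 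The forward rate $f$ is encoded as the per-stage probability that an eligible processor fires.

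The first thing I would check is correctness of the simulation, in the (kTAM-adapted) sense of the earlier definitions: for every kTAM trajectory $\overrightarrow{\alpha}$ there is an execution of $M_k$ whose $\rho$-writes track the attachments and detachments of $\overrightarrow{\alpha}$ in order, and conversely every execution of $M_k$ induces such a trajectory. This is a bookkeeping-heavy induction on stages: one verifies that the probability that $p_{ij}$ writes $\psi(t)$ equals the kTAM probability that $t$ attaches at $(i,j)$ given the current frontier (using $f$ and $\pi_e$), and that the probability $p_{ij}$ subsequently resets $\rho_{ij}$ matches the kTAM dissociation probability (using $r_b$ for erroneous bonds and $r_2$ for correct ones). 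Because the processor algorithm is deterministic given its reads, the map from DSM executions to trajectories is well defined, and surjectivity in both directions follows as in Theorem~\ref{theorem:aTAMsimulation}.

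Next I would establish that each $M_k$ satisfies GWO: whenever a process can prove, through its reads, that a write $w_1$ preceded a write $w_2$, all processes agree on that order. In $M_k$ any such proof arises from a chain in which a value written by $w_1$ is read, and that read precedes (on the reading process) the invocation of $w_2$; since each process's reads faithfully track the monotone $Index$ counters and the deterministic algorithm, the deduced ordering $w_1 \prec w_2$ is a genuine causality order and is therefore globally respected. Crucially, an erroneous read is \emph{not} a read-from edge --- it returns a value no write placed --- so it can never be used to prove that one write preceded another, and hence creates no ordering obligation that some other process could violate; reversibility does no harm either, since a reset of $\rho_{ij}$ is itself just a write whose effect propagates only through subsequent genuine reads.

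Finally, for tightness, I would rule out every consistency condition above GWO in Steinke and Nutt's lattice by showing that GDO and GPO both fail. GDO (cache consistency) fails exactly as in Theorem~\ref{theorem:aTAMsimulation}: there is no sequential order on the writes to a single $r^1_{ij}$, because a legal execution may have $p_{ij}$ read an erroneous value rather than the last value written. GPO fails because of reversibility together with the absence of atomicity: I would exhibit an execution in which $p_{ij}$ writes $\psi(t_1)$ to $\rho_{ij}$, resets it to EMPTY, then writes $\psi(t_2)$, while two processors reading $\rho_{ij}$ non-atomically end up with incompatible views of the order of these three operations of $p_{ij}$ --- and no kTAM trajectory imposes a global order here, since a dissociation is a purely local event. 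As causal consistency is precisely GPO together with GWO, the failure of GPO already shows $M_k$ is not causally consistent, and the failure of GDO disposes of the remaining conditions. The step I expect to be the main obstacle is calibrating the probabilistic register semantics so that the induced distribution on DSM executions genuinely matches the kTAM distribution while GWO still provably holds --- in particular, arguing that erroneous reads, which are exactly what pushes the model below causal consistency, nonetheless never manufacture a write--read--write proof of ordering that another process can contradict, and that resetting $\rho_{ij}$ and the ensuing re-reads reproduce the kTAM reverse-rate behaviour without secretly reinstating GPO.
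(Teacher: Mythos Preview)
Your construction and the GWO argument are essentially the paper's own approach, and your GDO failure argument is adequate. The gap is in your GPO failure argument. You propose to exhibit an execution in which two processors read $\rho_{ij}$ non-atomically and see the sequence $\psi(t_1)$, EMPTY, $\psi(t_2)$ in incompatible orders. But in the model you have built (and in the paper's), no processor ever reads another processor's $\rho$: communication between neighbours goes entirely through $r^1$, $r^2$, and $Index$, while $\rho_{ij}$ is written only by $p_{ij}$ as output. So the execution you need simply does not occur, and even if you added readers of $\rho$, you would have to specify $\rho$'s read semantics in a way that breaks GPO without also breaking your GWO argument --- which you have not done, and which ``returns most recent value'' (the natural choice for an output register) would not do.

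The correct, and much simpler, route is the one the paper takes: the erroneous reads of $r^1$ and $r^2$ already violate \emph{slow consistency}, since a read can return a value that no write ever placed there. Slow consistency is strictly weaker than PRAM consistency, and PRAM consistency is exactly GPO; hence GPO fails immediately. You actually have this ingredient in hand --- you invoke erroneous reads for the GDO failure and you note in your final paragraph that erroneous reads are ``exactly what pushes the model below causal consistency'' --- but you route it to the wrong branch of the lattice. Use the erroneous reads to kill GPO via slow consistency, and keep (either your argument or the paper's argument about out-of-order returns among genuinely written values) for GDO; then both conditions above GWO are disposed of without appealing to reversibility of $\rho$ at all.
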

\begin{proof}
We use a DSM system that is largely the same as the one used in Theorem~\ref{theorem:aTAMsimulation}: each processor has an $Index$ register, registers $r^1$, $r^2$, and $\rho$.  $Index$ and $\rho$ when read always returns the most recent value written to them, but $r^1$ and $r^2$ only satisfy a weaker consistency condition, as follows.

Let $V = \{\textrm{EMPTY}, v_1, v_2, \ldots, v_i \}$ be the set of values that has been written to $r^1$ (the behavior of $r^2$ will follow this same condition also).  Let $\pi_e$ be the probability of a tile binding error in the kTAM.  Then, when $r^1$ is read, with probability $\pi_e$ it returns a value that simulates a glue in the tile assembly system, but is not a value that had been previously written to $r^1$.  With probability $1-\pi_e$, $r^1$ returns one of the values previously written to it.  This value is determined by selecting at random from the sample space $V$, with each of the $v_i$ weighted by $f$.  Lastly, if in a previous stage, $r^1$ returned an erroneous value, at subsequent stages, with probability $r_b$, $r^1$ when read will return the value EMPTY; this simulates the dissociation of a bond.  After returning the value EMPTY, $r^1$ no longer returns EMPTY with probability $r_b$, unless at some future stage it returns an erroneous value again.

Execution of the algorithm proceeds in synchronous rounds, much as before.  This time, though, instead of one processor nondeterministically choosing to be the location to act this round, all processors on the perimeter of the simulated assembly act if they can.  (Recall that the kTAM assumes, consistent with experimental observation, that addition and dissociation of tiles only occurs on the perimeter; we mimic that assumption here.)  The specific algorithm is as follows.

At stage 0, each $p_{ij}$ that is part of the seed assembly writes the appropriate value to $\rho_{ij}$.  All other $\rho$, and all $r^1$ and $r^2$ are initialized to EMPTY.  All values of $Index$ are set to 1.

At stage $s \geq 1$, all $p_{ij}$ with non-EMPTY value in $\rho_{ij}$ write glues to their neighbors following the same method as in the aTAM simulation (\emph{i.e.}, reading from $Index$ and using that value to determine which subset of $\{r^1,r^2\}$ to write to).  Then, each processor on the perimeter of the assembly reads the contents of its $r^1$ and $r^2$; those registers return values probabilistically, as explained above.  Each processor on the perimeter then writes to $\rho$, if appropriate (based on the binding rules of $\mathcal{T}$), perhaps to EMPTY.  To conclude the round, processors that sent messages at the beginning of the round increment the value of the appropriate $Index$ registers, similar to the aTAM simulation.  

The behavior of each $p_{ij}$ mirrors the behavior of the locations of the surface on which $\mathcal{T}$ is assembling.  So for any legal execution of $\mathcal{M}$, there is a legal tile assembly sequence in which the tiles were placed in the same order.  Note that there is a significant change from the previous simulation:  multiple processors may act in a round, instead of just one at a time, as in the aTAM simulation, so multiple locations may write to the same $r^1$.  That is consistent with our definition of ``simulation,'' because for every tile assembly sequence of $\mathcal{T}$, there will be an execution of $\mathcal{M}$ that writes to each $\rho$ in the same order that locations add or remove tiles.  The order is what matters, not the exact time step at which a change takes place.  So $\mathcal{M}$ simulates the kTAM.

We now show that $\mathcal{M}$ obeys GWO, but no stronger consistency model in the lattice of Steinke and Nutt.  Recall that GWO means that there is global agreement on the order of potentially causally-related writes.  The writes of processor $p$ are causally related in history $H$ to the writes of processor $q$ only if one processor simulates a tile that binds or dissociates, and there is a sequence of processors (WLOG from $p$ to $q$) $S = \langle p,p_1,p_2,\ldots,q \rangle$, such that each processor is a neighbor of its successor, and each processor chose to simulate its particular tile type because of information written to it by its predecessor.  There is global agreement on the ordering of writes, because any processor $q'$ that could view a write $w$ by $p'$, or view a write causally related to $w$, is a member of a sequence $S'$ from $p'$ to $q'$ as above.  So $\mathcal{M}$ satisfies GWO.

On the other hand, $\mathcal{M}$ does not satisfy either of the consistency conditions immediately stronger than GWO in Steinke and Nutt's lattice.  These two stronger conditions are causal consistency (GPO+GWO), and GDO+GWO, where GDO is equivalent to cache consistency.  $\mathcal{M}$ does not satisfy GDO for the same reason as in the aTAM simulation: there is no guarantee that writes performed to a given memory location will return the most recent value as a read.  GPO, which is equivalent to PRAM consistency, was satisfied by the DSM models simulating the aTAM, but is not satisfied under this model because of the possibility that a register will return a value that has never been written to it.  This is in fact a violation of \emph{slow consistency}, which requires that a read return a value that has previously been written to it.  Slow consistency is strictly weaker than PRAM consistency, hence PRAM consistency cannot be satisfied.  So GWO is the strongest consistency condition in the lattice obeyed by $\mathcal{M}$.
\end{proof}
As in the previous section, there is a sense in which GWO is the ``tight'' memory consistency condition for kTAM simulation.  The only consistency condition in Steinke and Nutt's lattice weaker than GWO is \emph{local consistency}, which requires that each process's local operations appear to occur in the order specified by its program.  Without such a requirement, processors in a simulation could potentially change the value of $\rho$ before receiving the values written to them by their neighbors, which would be inconsistent with our intuition of simulating tiles binding to other tiles.  However, the lack of consistency conditions between local consistency and GWO is due only to the fact that no one has defined and studied such conditions, not because it is logically impossible to do so.  More precise consistency conditions for simulation of error-permitting self-assembly may be an area of future investigation.
\section{Self-stabilization applied to tile self-assembly} \label{section:proofreading}
Now that we have reduced the kTAM to models of DSM that run deterministic algorithms with sometimes-faulty registers, we can apply the theory of self-stabilization to prove the existence of certain error-correcting tile assembly systems.  Recall that a distributed system is \emph{self-stabilizing} if, starting from any initial state, the system is guaranteed to converge to, and stay in thereafter, one of a set of ``legitimate'' states; this research topic was begun by Dijkstra~\cite{Dijkstra-self-stabilizing}.  In the case of tile assembly, the legitimate states are the assemblies achievable from error-free assembly sequences.

While not in wide use in the distributed computing literature, there is a simple, polynomial-time conversion that, given a constant-time local algorithm, yields a constant-time self-stabilizing algorithm~\cite{AwerbuchSipser-self-stab}~\cite{AwerbuchVarghese-self-stab}.  (See~\cite{Lenzen2009Local} for a recent exposition of this conversion, with additional motivation and examples.)  Molecular self-assembly algorithms are inherently local, and the self-assembly literature has considered two main classes of self-stabilizing algorithms: \emph{proofreading} tilesets~\cite{proofreading}, which correct initial binding errors; and \emph{self-healing} tilesets~\cite{winfree-selfhealing}, which rebuild completed assemblies that have been damaged.  Soloveichik \emph{et. al.} recently demonstrated a ``proof of concept,'' by constructing a tile assembly system that combined both proofreading and self-healing properties within a restricted version of the kTAM~\cite{selfhealing}.  Their construction worked for tile assembly systems that only built north and east.  We can use tools from self-stabilization to generalize their result to the full kTAM, and to locally deterministic models that grow in any direction.

If $\mathcal{T}$ is a tile assembly system, the \emph{$c$-scaled result of $\mathcal{T}$} is the colored shape on the integer plane obtained by ``blowing up'' each location in the result of $\mathcal{T}$ to a $c \times c$ block of tiles, such that each tile in the block is colored the same as the tile on the source location in the result of $\mathcal{T}$.
\begin{theorem} \label{theorem:self-stab}
There is a polynomial-time algorithm that does the following: upon input of a locally deterministic tile assembly system $\mathcal{T}$ for the kTAM, it outputs a self-healing, proofreading, tile assembly system $\mathcal{T}'$ such that $\mathcal{T}'$ builds the $c$-scaled result of $\mathcal{T}$, for some constant $c$.  Further, $|T'| \leq \binom{4}{3}^2c^2|T|^2$.
\end{theorem}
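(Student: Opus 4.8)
The plan is to feed the locally deterministic system $\mathcal{T}$ through the machinery of Sections~\ref{section:atamreduction}--\ref{section:ktamreduction} so as to view it as a deterministic local algorithm on the grid, apply the generic local-to-self-stabilizing conversion to that algorithm, and then translate the output back into a tile assembly system. First I would invoke Theorem~\ref{theorem:local-det-is-cwf}: since $\mathcal{T}$ is locally deterministic, the program $P$ for which $\langle\mathcal{M},P\rangle$ simulates $\mathcal{T}$ is concurrent-write free and binding-rule determined, hence may be taken deterministic. The point I would stress is that $P$ is a \emph{constant-round} local algorithm: each processor $p_{ij}$ computes the value it writes to $\rho_{ij}$ from the contents of $r^1_{ij},r^2_{ij}$, that is, from messages written by immediate grid-neighbors, in $O(1)$ rounds, with no dependence on an unbounded-radius neighborhood. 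So, via the reduction, $\mathcal{T}$ literally \emph{is} an $O(1)$-round local algorithm on $\mathbb{Z}^2$.

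Next I would apply the Awerbuch--Sipser / Awerbuch--Varghese conversion, in the form exposited in~\cite{Lenzen2009Local}: any $t$-round local algorithm compiles into an $O(t)$-round self-stabilizing local algorithm, at the cost of a multiplicative blowup in per-node state. Since $t=O(1)$ here, the conversion yields a self-stabilizing local algorithm $P'$ with only a constant-factor state blowup. I would then realize $P'$ as a tile assembly system $\mathcal{T}'$ using the correspondence of Theorem~\ref{theorem:kTAMreduction}: each grid location becomes a $c\times c$ block of tile types (a \emph{macro-tile}), where $c$ is the $O(1)$ radius/round count of $P'$, so that each macro-tile has enough internal ``room'' to host the constant-round recomputation of $P'$ and the comparison of its tentative output against those of its neighbors. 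Because $P'$ computes, on legitimate configurations, exactly what $P$ computes, and $P$ simulates $\mathcal{T}$, the terminal assemblies of $\mathcal{T}'$ are precisely the $c$-scaled result of $\mathcal{T}$.

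The heart of the argument, and the step I expect to be the main obstacle, is showing that $\mathcal{T}'$ is simultaneously proofreading and self-healing. Self-stabilization means convergence, from \emph{any} initial configuration, to (and thereafter within) the set of legitimate states, here the macro-assemblies reachable by error-free assembly sequences of $\mathcal{T}'$. Damage to a completed assembly leaves it in some arbitrary non-legitimate configuration, so convergence from arbitrary states is exactly self-healing; a binding error committed at the frontier during kTAM growth likewise produces a non-legitimate configuration that must be driven out, which is exactly proofreading. The delicate point is that the generic conversion lets a node freely reset its state, whereas the kTAM allows attachment and detachment only at the perimeter of the assembly --- precisely the restriction that, in Theorem~\ref{theorem:kTAMreduction}, forced $\rho$ to be updated perimeter-wise while $r^1,r^2$ merely return values. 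So I must argue that the self-stabilizing correction propagates inward from the boundary of the corrupted region, macro-tile by macro-tile, in a way consistent with ``detach only at the perimeter'': an erroneous macro-block, carrying a detectable mismatch on its exposed boundary, is under-bound relative to the kTAM reverse-rate bookkeeping (using $f=r_2$ and the fast dissociation of erroneous bonds) and is peeled away, exposing the next layer, until the legitimate $c$-scaled assembly is restored; one then checks that this process terminates on the correct macro-assembly and removes errors at least as fast as it can create them. Establishing this convergence under the perimeter constraint, with the kTAM rate parameters plugged in, is where the real work lies.

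Finally, the size bound is a bookkeeping exercise once the construction is fixed. The scale factor $c$ is the $O(1)$ round/radius parameter of $P'$, and a macro-tile carries a position tag within its $c\times c$ block, giving $c^2$ choices. To run the local verification, each macro-tile must remember the tile type of $\mathcal{T}$ it simulates together with the information identifying the neighbor(s) it bound to --- the predecessor data needed to re-check local determinism --- which is bounded by a factor $|T|^2$; and to be self-healing in \emph{every} direction (not just north and east as in~\cite{selfhealing}), it must in addition record, for the proofreading layer and the self-healing layer separately, which of the four directions plays the role of the ``parent'' in a local repair, contributing a factor bounded by $\binom{4}{3}^2$. Multiplying these contributions gives $|T'|\le\binom{4}{3}^2c^2|T|^2$, as claimed, and every step --- the reduction, the conversion of~\cite{AwerbuchSipser-self-stab,AwerbuchVarghese-self-stab}, and the macro-tile realization --- runs in time polynomial in $|\mathcal{T}|$.
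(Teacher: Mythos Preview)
Your proposal is correct and follows essentially the same route as the paper: reduce $\mathcal{T}$ to a constant-round local algorithm via the DSM simulation, apply the Awerbuch--Sipser/Awerbuch--Varghese conversion to obtain a self-stabilizing algorithm $P^*$, and translate back into a $c$-scaled tile system, with the $\binom{4}{3}^2c^2|T|^2$ bound coming from the same bookkeeping (block position, simulated tile type plus neighbor information, and input-side/direction tags so that growth can be reversed for self-healing in any direction). The perimeter-only detachment issue you single out as ``where the real work lies'' is treated no more rigorously in the paper than in your sketch---the paper simply invokes Turing universality of tile assembly and the direction-tagging to justify the back-translation---so your level of detail already matches the original.
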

\begin{proof}
Let $\mathcal{T}$ be a locally deterministic tile assembly system, and let $\langle \mathcal{M}, P \rangle$ be the DSM model that simulates it, as produced by the reduction in Theorem~\ref{theorem:kTAMreduction}.  Then let $P^*$ be the self-stabilizing algorithm obtained by applying the conversion of~\cite{AwerbuchSipser-self-stab}~\cite{AwerbuchVarghese-self-stab} to $P$.  (Briefly, $P$ is a constant-time algorithm.  To convert $P$ to a self-stabilizing algorithm, ``unroll'' all possible executions of $P$, over all possible inputs, as a circuit.  The program $P^*$ simulates that circuit, and assumes the inputs to the circuit are correct.  Then it repeats that same step $k$ times, where $k$ is large enough such that, with high probability, all the inputs at round $k$ really \emph{are} correct.)  $P^*$ is also a constant-time algorithm, whose running time depends on the size of the $P$-simulating circuit and the value of $k$ we choose.

As tile assembly is Turing universal, we can of course convert $P^*$ into tiles.  More importantly, there is some constant $c$ such that we need lay out only $c$-many tiles in order to simulate the behavior $P^*$ on any of its legal inputs.  WLOG we assume that $c$ is large enough, and our tile simulation is ``padded'' if necessary, so that the simulation of $P^*$, on input of tile type $t$, takes up a $c \times c$ square, for any $t$.  Further, we dedicate a set of tile types to each input $t$, so the color of each tile in this $c \times c$ square is the same color as the input tile $t$.  Finally, in order to ensure that self-healing does not generate nondeterministic behavior, we have to differentiate each tile that could potentially add tiles in more than one direction.  (This is why the construction in~\cite{selfhealing} assumed that tiles could only grow north and east.)  In other words, if a ``hole gets punched'' in a completed assembly, self-healing tiles can rebuild the empty area, by building in the reverse direction from the original assembly sequence.  So for each tile type in $\mathcal{T}$, we need $\mathcal{T}'$ to include a unique tile type that encodes ``Tile type $t$ attached to the assembly using input sides $S$.''  It takes at least one side to attach, so at most three sides of $t$ remain as output sides, which is where we need to encode the information to reverse the process of binding $t$ at that location.  So the number of tiles we need to place into $\mathcal{T}'$ to simulate a given tile $t$ of $\mathcal{T}$ is upper-bounded by $\binom{4}{3}$.  Further, we need sufficient new tile types that encode ``I have received information from tile type $t$.''  There are at most another $\binom{4}{3}|T|$ of those.  Since for each tile type in $\mathcal{T}$ there are at most $c^2$ tile types that simulate $P^*$ on $t$, we get the overall upper-bound $|T'| \leq \binom{4}{3}^2c^2|T|^2$.
\end{proof}
A discussion of theoretical and practical reasons to choose particular values of $k$ (\emph{i.e.}, the optimal number of iterations to minimize the possibility of error) for DNA self-assembly appears in~\cite{selfhealing}.  For our purposes, it suffices to use the conversion from constant-time algorithm to self-stabilizing algorithm as a black box, without considering the specific types of error that are most likely to occur in the kTAM.  Therefore, Theorem~\ref{theorem:self-stab} gives much weaker tile complexity bounds than the dedicated construction that appears in~\cite{selfhealing}, but it provides a general method that can be extended to a variety of self-assembly models, not just DNA tiling, and not just two-dimensional surfaces.

Theorem~\ref{theorem:self-stab} may be a ``proof of applicability'' of self-stabilization techniques to self-assembly, but more is needed than ``just'' self-stabilization.  Perhaps the most troublesome form of error in nanoscale self-assembly experiments occurs when a tile binds incorrectly, and then other tiles bind around it, preventing it from dissociating.  To address this in the language of self-stabilization, one needs a \emph{fault-containing} self-stabilizing algorithm with minimal \emph{fault gap}~\cite{fault-containing}, that is to say, a self-stabilizing algorithm in which the effects of a processor's failure are contained within that processor's local neighborhood, and, after recovering from a given fault, there is only a small time gap until the system can recover from a new fault.  Fujibiyashi \emph{et al.} have suggested extending the kTAM with special tile mechanisms that would achieve a one-time-step fault gap with high probability~\cite{error-suppression}, though they did not phrase their results in the language of self-stabilization.  There is some evidence that handling fault-containment probabilistically instead of deterministically will reduce an algorithm's fault-gap~\cite{probabilistic-fault-containment}, but little is known about such tradeoffs, either in self-assembly or in general distributed systems.
\section{Conclusion} \label{section:conclusion}
In this paper, we reduced the Abstract Tile Assembly Model and the Kinetic Tile Assembly Model to systems of distributed shared memory with particular memory consistency conditions.  We then applied the reductions to show that (1) local determinism is closely related to concurrent-write freedom in parallel programming, and (2) the theory of self-stabilization can be usefully applied to questions of error correction in self-assembly.

We focused on the aTAM and the kTAM because they have been the most theoretically studied self-assembly models.  However, both models are limited to the binding of DNA tiles to other DNA tiles, and as the recent nanofabrication survey~\cite{nanofabrication-survey} points out, the ``greatest promise'' of algorithmic DNA self-assembly ``may lie in applications where DNA nanostructure templates have been used to assemble other inorganic components and functional groups.''  The IBM/Caltech microchip project is an example of this research direction.  Therefore, we believe it is critical to develop a programming theory for ``mixed-media'' models of self-assembly (such models by-and-large do not yet exist), and that programming theory may be advanced by continuing the investigation begun in this paper.

From the perspective of ``pure theory,'' there has been initial work to classify shared read/write variables~\cite{shared-variables}, much as Steinke and Nutt classified known memory consistency systems.  It would be useful to explore further the weak consistency, like GWO, offered by registers that simulate the binding of self-assembling agents, whether in DNA or another medium.  It would also be useful to explore what ``more robust'' registers (like the consensus objects in~\cite{datsa}) could be built, to know how agents might cooperate to form more fault-tolerant structures.  Lastly, we believe it would be productive to explore further the relationship between self-assembling structures and the placing of geometric constraints on local and self-stabilizing algorithms.
\section*{Acknowledgments}
I am grateful to Soma Chaudhuri, Christoph Lentzen, Jack Lutz, Paul Rothemund, David Soloveichik, Jukka Suomela and Erik Winfree for helpful discussions.
\bibliography{tile_references}
\end{document}